\documentclass[10pt,conference]{IEEEtran} 
 
\usepackage[utf8]{inputenc}
\usepackage[T1]{fontenc}
\usepackage{amsmath, amssymb, amsthm}
\usepackage{hyperref}
\usepackage{algorithm}
\usepackage{algpseudocode}
\usepackage{booktabs}
\usepackage{titlesec}
\usepackage{stfloats}

\title{\textbf{Proof in a Bottle: Long-Lived Verifiable Secret Sharing via Pre-Quantum Commitment and Immutable Ledger Binding}}

\author{
\IEEEauthorblockN{Markus Jakobsson}
\IEEEauthorblockA{\textit{Artema LABS}\\
New York, USA \\
markus@artemalabs.com}
\and
\IEEEauthorblockN{Keir Finlow-Bates}
\IEEEauthorblockA{\textit{Artema LABS}\\
Eura, Finland \\
keir@artemalabs.com}
}

\date{\today}

\theoremstyle{definition}
\newtheorem{definition}{Definition}[section]
\newtheorem{assumption}{Assumption}[section]

\theoremstyle{plain}
\newtheorem{proposition}{Proposition}[section]

\begin{document}

\maketitle

\begin{abstract}

Traditional secret sharing techniques such as Verifiable Secret sharing (VSS) are vulnerable to quantum attacks by a Cryptographically Relevant Quantum Computer (CRQC) running Shor's algorithm. 

We observe that the binding a VSS needs is required only \emph{at the moment of dealing}, and this binding can be made \emph{before} any CRQC exists. We propose \textbf{Proof in a Bottle} (PiB), which decouples verifiability from long-term binding: standard Pedersen commitments provide zero-knowledge, publicly checkable consistency during a pre-quantum window, while a salted, index-bound hash of the share set, anchored to an immutable public ledger, \emph{preserves} the binding established in that window into the post-quantum era. The guarantee is explicitly a \emph{commit-now, reveal-later} one: it protects today's honest dealings against tomorrow's quantum adversary. 

\end{abstract}

\section{Introduction}
Verifiable Secret Sharing (VSS)~\cite{feldman1987,pedersen1992} lets a dealer distribute shares of a secret so that (1) no fewer than $t$ shareholders learn anything about the secret (\emph{secrecy}), (2) any $t$ or more reconstruct it (\emph{correctness}), and (3) shareholders can check that the shares they hold are mutually consistent with a single degree-$(t{-}1)$ polynomial (\emph{commitment}, or binding). Pedersen VSS realizes this by committing to each polynomial coefficient as $C_j = g^{a_j}h^{r_j}$ and exploiting homomorphism so that each share $(i, y_i)$ is checked against $\prod_j C_j^{i^j}$ without revealing the coefficients.

    We  introduce {\em Proof in a Bottle} (PiB,) which \emph{decouples} zero-knowledge verifiability (Pedersen, pre-quantum) from \emph{long-lived binding} (a salted, index-bound hash anchored to an immutable ledger), formalizing the resulting \emph{commit-now, reveal-later} guarantee.

\subsection{The quantum threat}
The security of Pedersen commitments %splits into two very different properties. 
has two aspects:

\emph{Hiding} is perfect (information-theoretic): a commitment is a uniformly distributed group element, and for every candidate secret there is exactly one randomizer consistent with it, so the commitment reveals nothing, even to an unbounded adversary. \emph{Binding}, by contrast, is computational and rests on DLP.

A CRQC running Shor's algorithm~\cite{shor1994} recovers the trapdoor $x=\log_g h$. This breaks binding only. Concretely, the verification target $P_i=\prod_j C_j^{i^j}$ is a fixed group element with $\log_g P_i = y_i + x R_i$; knowing $x$, an adversary can take \emph{any} value $y_i'$ and solve $R_i' = (\,\log_g P_i - y_i'\,)/x$ so that $g^{y_i'}h^{R_i'}=P_i$. The forged $(y_i',R_i')$ passes the individual Pedersen check, and the forged values need not lie on \emph{any} single degree-$(t{-}1)$ polynomial. Different qualified subsets then reconstruct different secrets: the \textbf{split-brain} failure in which different qualified subsets of shareholders reconstruct different secrets from a single, ostensibly valid dealing.

What Shor does \emph{not} yield is recovery of the true secret: from the commitment alone the adversary learns only the single linear relation $y_i + xR_i$, which every candidate secret satisfies. The failure is thus confined to binding (the commitments become malleable).

We say the commitments become \emph{malleable} when an adversary who has recovered the scheme's trapdoor can open a fixed public commitment to values of its choosing; this is precisely a loss of the \emph{binding} property (equivalently, the ability to \emph{equivocate}), and it leaves \emph{hiding} untouched. 

This distinction matters, because informal descriptions that speak of the adversary ``recovering the secret and randomizers'' overstate the damage: the secret is never exposed, only the guarantee that a unique one was committed.

\subsection{The temporal insight}
The binding a VSS requires is a property of the \emph{act of dealing}: it must be impossible, \emph{once shares are committed}, to later present a different consistent-looking sharing. Crucially, commitments can be timestamped to moments at which DLP is still hard. During such a \emph{pre-quantum window}, Pedersen binding functions normally and sharing is publicly, zero-knowledge verifiable. If, at that moment, the entire share set is frozen against an immutable record, then a \emph{later} CRQC cannot equivocate: the ledger already fixes which shares exist, and the equivocation freedom Shor grants is useless as the target is sealed. This is the mirror image of ``harvest-now, decrypt-later''; we call it \emph{commit-now, reveal-later}. It squarely addresses the realistic threat that commitments made today, under classical assumptions, will be attacked by a CRQC that arrives years later.

\subsection{Contributions}

\section{Related Work}
\textbf{VSS and commitments.} Feldman's VSS~\cite{feldman1987} 
is not malleable, but is vulnerable to  a quantum attacker recovering the committed secret. 
Pedersen's VSS~\cite{pedersen1992}, 
in contrast, protects the committed secret against a quantum attacker, but at the cost of being malleable. 

\textbf{Public verifiability.} Publicly Verifiable Secret Sharing (PVSS)~\cite{stadler1996,schoenmakers1999} lets any observer verify a dealing against public data, but does not protect against quantum attacks.

\textbf{Long-lived integrity.} Anchoring a digest to an append-only record to obtain long-term, tamper-evident timestamps is the classical notarization idea of Haber and Stornetta~\cite{haber1991}, of which a blockchain is a modern, decentralized instantiation. 

PiB is long-term \emph{notarization of a VSS dealing}: one of its novel contributions is to recognize that notarizing a \emph{binding} commitment during the pre-quantum window is sufficient to carry binding across the boundary, without any post-quantum algebraic assumption on the commitment itself.

\textbf{Post-quantum VSS.} Lattice-based VSS (e.g., from the Short Integer Solution problem)~\cite{ajtai1996} offers binding under assumptions believed hard for quantum adversaries, but with large parameters, intricate zero-knowledge machinery, and audit burden. PiB is not a competitor on the same axis: it trades a new hardness assumption for a timing assumption plus an immutable-ledger assumption, and is best suited to settings where dealings can be committed ahead of the quantum horizon.

\section{Preliminaries, Definitions, and Threat Model}

\subsection{Cryptographic primitives}
Let $p$ be a large prime defining $\mathbb{F}_p$, and let $g,h\in\mathbb{Z}_p^{*}$ be public generators with $\log_g h$ unknown to all parties at setup. The Pedersen commitment to $m$ with randomizer $r$ is $C(m;r)=g^{m}h^{r}\pmod p$; it is perfectly hiding and computationally binding under DLP. Let $H:\{0,1\}^{*}\!\to\!\{0,1\}^{N}$ be a cryptographic hash function; we model it as a random oracle where hiding is argued, and rely on its second-preimage resistance for binding. Against a quantum adversary, Grover search gives preimage and second-preimage resistance $\approx 2^{N/2}$, while the BHT algorithm~\cite{bht1998} gives collision resistance only $\approx 2^{N/3}$; our binding argument requires \emph{second-preimage} resistance, so we quote $2^{N/2}$.

\subsection{Security definitions}
\begin{definition}[$(t,n)$-VSS]\label{def:vss}
A dealer shares $s\in\mathbb{F}_p$ among $n$ parties. The scheme is a $(t,n)$-VSS if:
\emph{(Correctness)} any $t$ honestly held shares reconstruct $s$;
\emph{(Secrecy)} any coalition of $<t$ parties, together with all public data, has negligible advantage in distinguishing $s$ from uniform;
\emph{(Binding)} once the dealing is published, no (possibly cheating) party can cause two qualified subsets to reconstruct different values, except with negligible probability.
Here \emph{negligible} means negligible in the hash length $N$, taken over the dated adversary of Section~\ref{sec:threat} (classical before $\tau_Q$, quantum with $O(2^{N/2})$ hash queries after), and conditioned on the ledger-immutability assumption (Assumption~\ref{ass:ledger}); it is not the standard quantification over all PPT adversaries, since our adversary is unbounded in DLP after $\tau_Q$.
\end{definition}

\begin{definition}[Preserved binding across an epoch]\label{def:preserved}
Fix a \emph{quantum horizon} time $\tau_Q$ at which a CRQC first exists. Binding is \emph{preserved} across $\tau_Q$ if the Binding property of Definition~\ref{def:vss} holds against an adversary who is classical before $\tau_Q$ and quantum after, for any dealing whose public commitment is timestamped at some $\tau_0<\tau_Q$.
\end{definition}

\subsection{Threat model}\label{sec:threat}
\begin{itemize}
    \item \textbf{Quantum capability, dated.} The adversary is classical before $\tau_Q$ and gains full Shor/Grover capability after $\tau_Q$. In particular it can solve DLP/ECDLP only after $\tau_Q$.
    \item \textbf{Corruption.} After $\tau_Q$ the adversary may control up to $t-1$ shareholders and may collude with a dealer who wishes to equivocate on a dealing that was \emph{already published} before $\tau_Q$.
    \item \textbf{Ledger.} The adversary cannot rewrite ledger history that has been finalized under post-quantum-secure consensus (see Assumption~\ref{ass:ledger}).
    \item \textbf{Goal.} To induce split-brain: qualified subsets reconstructing different secrets from a pre-$\tau_Q$ dealing.
\end{itemize}

\subsection{Computational Assumptions}
\begin{assumption}[Pre-quantum commitment window]\label{ass:window}
The public commitment of any dealing PiB protects is timestamped at some $\tau_0<\tau_Q$, i.e.\ while DLP is hard, and the dealing is publicly consistency-verifiable during $[\tau_0,\tau_Q)$.
\end{assumption}

\begin{assumption}[Ledger transition and finality]\label{ass:ledger}
Public ledgers in current use authenticate blocks and history with classical signatures and Proof-of-Work/Proof-of-Stake consensus that a CRQC would undermine. We assume that before $\tau_Q$ these ledgers migrate to post-quantum-secure consensus and signatures, and that any anchor PiB relies on is buried under sufficient post-quantum-finalized history \emph{before} $\tau_Q$. Under this assumption the anchor's inclusion and ordering become immutable against the post-$\tau_Q$ adversary.
\end{assumption}

We are explicit that Assumptions~\ref{ass:window}--\ref{ass:ledger} are load-bearing: they replace a post-quantum \emph{algebraic} hardness assumption on the commitment with a \emph{timing plus immutability} assumption on deployment.

\section{The Pre-Quantum Commitment Model}\label{sec:precommit}
PiB is designed for a dealer who wishes to \emph{stockpile} verifiable dealings ahead of the quantum horizon. Concretely, before $\tau_Q$ the dealer generates a large batch of independent $(t,n)$ sharings $\{f^{(m)}\}_{m=1}^{M}$ (for prospective threshold keys, escrowed secrets, key-rotation reserves, or future wallet provisioning) and publishes only the corresponding commitments and anchors (Section~\ref{sec:protocol}). No secret or share need be revealed at publication time. Individual sharings are then \emph{activated} (distributed to shareholders and, when needed, reconstructed) at arbitrary later times, possibly well after $\tau_Q$.

The value of batching is that a single pre-quantum event fixes the binding of the entire reserve. Because each anchor is timestamped at $\tau_0<\tau_Q$ (Assumption~\ref{ass:window}) and finalized under transitioning consensus (Assumption~\ref{ass:ledger}), the binding of \emph{every} sharing in the batch is sealed while DLP still holds, regardless of when it is later opened. This is precisely the \emph{commit-now, reveal-later} posture: the scheme presumes that the honest work of dealing happens today and that the adversary's advantage materializes later, and it removes any benefit from that later advantage with respect to binding.

Two consequences deserve emphasis. First, the window is a hard requirement: a dealing whose anchor is first published \emph{after} $\tau_Q$ enjoys no protection, because a quantum dealer can then anchor an already-inconsistent set (Section~\ref{sec:security}, Proposition~\ref{rem:limit}). Second, the reserve's \emph{secrecy} does not depend on the window at all: Pedersen hiding is perfect and, as shown below, the anchoring is arranged to leak nothing, so stockpiled secrets remain confidential indefinitely, whether opened before or after $\tau_Q$.

\section{The PiB Protocol}\label{sec:protocol}
We describe a single $(t,n)$ sharing; a batch repeats it independently per $m$.

\subsection{Setup}
Let $\mathcal{P}=\{1,\dots,n\}$. The dealer picks a random polynomial $f(x)=a_{t-1}x^{t-1}+\dots+a_1 x + s$ over $\mathbb{F}_p$, with $a_0=s$ the secret.

\subsection{Share generation and double commitment}
\begin{enumerate}
    \item \textbf{Shares.} For each $i\in\mathcal{P}$, compute $y_i=f(i)$.
    \item \textbf{Coefficient commitments.} Pick random $r_j$ and publish $C_j=g^{a_j}h^{r_j}\pmod p$ for $j=0,\dots,t-1$.
    \item \textbf{Share randomizers.} By linearity, $y_i=\sum_{j=0}^{t-1}a_j i^{j}$ and the matching randomizer is $R_i=\sum_{j=0}^{t-1}r_j i^{j}$, so that $g^{y_i}h^{R_i}=\prod_j C_j^{i^{j}}$.
    \item \textbf{Salted, index-bound leaves.} Draw an independent uniform salt $s_i\in\{0,1\}^{\kappa}$ per share and form the leaf
    \[ \ell_i = H(\,i \,\|\, y_i \,\|\, s_i\,). \]
    Binding the index $i$ prevents one shareholder from re-presenting another's value; the secret salt $s_i$ makes the leaf hiding (Proposition~\ref{prop:hiding}).
    \item \textbf{Anchor.} For small $n$, publish the leaf set $\{\ell_i\}_{i=1}^{n}$. For large $n$, publish a single Merkle root
    \[ \mathrm{Root}=H\big(\mathrm{MerkleTree}(\ell_1,\dots,\ell_n)\big). \]
    \item \textbf{Publish (pre-$\tau_Q$).} Post a ledger transaction containing $\mathcal{C}=\{C_0,\dots,C_{t-1}\}$, the anchor (leaf set or $\mathrm{Root}$), and a timestamp $\tau_0$.
\end{enumerate}

\subsection{Distribution}
Privately send to each $i$: the share $(i,y_i)$; the randomizer $R_i$; the salt $s_i$; and (Merkle case) the authentication path $\pi_i$ for $\ell_i$.

\subsection{Verification (zero-knowledge)}
Party $i$ checks, without learning $s$:
\begin{enumerate}
    \item \textbf{Inclusion.} Recompute $\ell_i=H(i\|y_i\|s_i)$ and verify it against the published leaf set, or against $\mathrm{Root}$ via $\pi_i$.
    \item \textbf{Consistency.} Verify
    \[ g^{y_i}h^{R_i} \;\stackrel{?}{=}\; \prod_{j=0}^{t-1} C_j^{\,i^{j}}
       \;=\; g^{\sum_j a_j i^{j}}\,h^{\sum_j r_j i^{j}}. \]
\end{enumerate}
Both checks are zero-knowledge in $s$. Under Assumption~\ref{ass:window}, the consistency check is meaningful (DLP hard) at least once during $[\tau_0,\tau_Q)$; public verifiability means any observer, not only shareholders, may perform it and record success on-ledger if desired.

\subsection{Reconstruction}
Any $\mathcal{S}\subseteq\mathcal{P}$ with $|\mathcal{S}|=t$ recovers
\[ s=f(0)=\sum_{i\in\mathcal{S}} y_i \prod_{j\in\mathcal{S},\,j\neq i}\frac{-x_j}{x_i-x_j}\pmod p, \]
using only shares whose inclusion and consistency checks passed.

\section{Security Analysis}\label{sec:security}

\subsection{Preserved binding}
\begin{proposition}[Binding preservation]\label{prop:binding}
Under Assumptions~\ref{ass:window} and~\ref{ass:ledger}, and assuming $H$ is second-preimage resistant against quantum adversaries ($\approx 2^{N/2}$ work), PiB satisfies preserved binding (Definition~\ref{def:preserved}): except with negligible probability, no post-$\tau_Q$ adversary can cause any qualified subset to reconstruct a value other than the secret $s^{\star}$ fixed at $\tau_0$.
\end{proposition}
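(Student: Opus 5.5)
The proof has two stages. First, at $\tau_0$, Pedersen binding fixes a unique degree-$(t{-}1)$ polynomial $f^\star$ with $f^\star(0)=s^\star$. Second, after $\tau_Q$, an adversary who wants some qualified subset to reconstruct a different value must produce a share that the anchored leaf does not match. So any post-$\tau_Q$ deviation reduces to a quantum second-preimage attack on $H$ or to rewriting finalized ledger history. I will fix the published tuple $(\mathcal{C},\mathrm{anchor},\tau_0)$. By Assumption~\ref{ass:ledger}, its content and ordering cannot be changed after $\tau_Q$. Hence every post-$\tau_Q$ verification, including every inclusion check, is performed against exactly the leaves $\ell_1,\dots,\ell_n$ (or the $\mathrm{Root}$) fixed at $\tau_0$.

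The first step shows that at $\tau_0$ the leaves determine shares lying on one polynomial. By Assumption~\ref{ass:window}, the consistency check is run while DLP is hard. A classical dealer who publishes leaves $\ell_i=H(i\|y_i\|s_i)$ whose values $y_i$ pass the Pedersen check but do not all lie on one degree-$(t{-}1)$ polynomial yields two openings of some $\prod_j C_j^{i^j}$, and hence $\log_g h$. This is the standard Pedersen binding reduction. So, except with negligible probability, there is a unique $f^\star$ with $y_i^\star=f^\star(i)$ for all $i$, and $s^\star=f^\star(0)$. The index in the leaf ties $y_i^\star$ to position $i$. I will treat this as the definition of ``the secret fixed at $\tau_0$.''

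The second step handles the post-$\tau_Q$ adversary. It may recover $x=\log_g h$ and forge $(y_i',R_i')$ with $y_i'\neq y_i^\star$ that pass consistency, as described in the introduction. But reconstruction accepts only shares that pass inclusion. That requires a salt $s_i'$ with $H(i\|y_i'\|s_i')=\ell_i$, or, in the Merkle case, a valid authentication path to $\mathrm{Root}$. Since $(i\|y_i^\star\|s_i)$ is a known preimage to the colluding dealer, this is a second preimage. In the Merkle case, I will argue by hybrids along the path: any accepting path for a changed leaf gives a second preimage at the lowest node where it differs from the honest tree. By the quantum bound of $\approx 2^{N/2}$, an adversary making $q$ queries succeeds with probability $O(q^2/2^N)$ per attempt. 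Taking a union bound over the at most $n$ indices and the number of tree nodes keeps this negligible in $N$. Finally, if every share accepted for reconstruction equals $y_i^\star$, then Lagrange interpolation over any $t$ of them returns $f^\star(0)=s^\star$. This holds for every qualified subset, so split-brain cannot occur.

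I expect the main obstacle to be the reduction to second-preimage resistance in the second step. The adversary colludes with the dealer, who knows the original preimage, so the target is not chosen uniformly as the textbook definition requires. A dealer who planned ahead could also have chosen $\ell_i$ as a collision under the weaker $2^{N/3}$ BHT bound. I will handle this by noting that before $\tau_Q$ the dealer is classical, so any collision it chose would cost $\approx 2^{N/2}$ classical work. After $\tau_Q$, the target leaf is already fixed by the ledger, which makes the attack a genuine fixed-target second-preimage search. If this is too loose, I will fall back on modeling $H$ as a random oracle and bounding directly the probability that any post-$\tau_Q$ query hits one of the $n$ fixed leaves with a new input.
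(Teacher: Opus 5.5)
Your proposal is correct and follows the same route as the paper's proof sketch. Pedersen binding during $[\tau_0,\tau_Q)$ fixes $f^\star$ and hence the anchored leaves; after $\tau_Q$ the now-forgeable consistency check is useless because inclusion demands a second preimage of a fixed leaf (or of an internal Merkle node), and replacing the anchor is excluded by Assumption~\ref{ass:ledger}. Your explicit two-openings reduction, the Merkle-path hybrid, and the point that a colluding dealer could only plant a leaf collision classically before $\tau_Q$ (at $\approx 2^{N/2}$ cost) refine steps the paper leaves implicit; they do not change the approach.
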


\begin{proof}[Proof sketch]
By Assumption~\ref{ass:window}, at $\tau_0<\tau_Q$ DLP is hard, so Pedersen binding holds: the published $\mathcal{C}$ computationally determine unique coefficients $\{a_j\}$, hence a unique polynomial $f$ and shares $y_i^{\star}=f(i)$, and the leaves $\ell_i^{\star}=H(i\|y_i^{\star}\|s_i)$ that any observer can (and does) verify for consistency during the window. The anchor fixes exactly these leaves.

To induce split-brain after $\tau_Q$, the adversary must present, for some $i$ in a qualified subset, a value $y_i'\neq y_i^{\star}$ that passes both checks. The consistency check is now trivially satisfiable: knowing $x$, set $R_i'=(\log_g P_i - y_i')/x$. The inclusion check, however, requires a salt $s_i'$ with $H(i\|y_i'\|s_i')=\ell_i^{\star}$ (leaf-set case) or a path yielding the fixed $\mathrm{Root}$ (Merkle case). The former is a \emph{second preimage} on $H$; the latter additionally requires either a second preimage or an internal Merkle collision. Both are infeasible at $\approx 2^{N/2}$. The remaining avenue, namely replacing the anchor itself, requires rewriting finalized ledger history, excluded by Assumption~\ref{ass:ledger}. Hence every share that passes both checks equals $y_i^{\star}$, and reconstruction yields $s^{\star}$.
\end{proof}

\begin{table*}
    \centering
    \caption{}
    \label{tab:comparison}
    \begin{tabular}{lcc}
        \toprule
        Feature & PiB (proposed) & Lattice-based VSS \\
        \midrule
        Binding basis & 2nd-preimage res.\ + immutable ledger & SIS/LWE (believed PQ-hard) \\
        Binding type & Preserved from pre-quantum window & Native post-quantum \\
        Timing assumption & Required (commit before $\tau_Q$) & None \\
        Ledger assumption & Required (PQ-finalized anchor) & None \\
        Hiding & Perfect (Pedersen); anchor statistical & Scheme-dependent \\
        Verification cost & $O(t)$ modular exponentiations & Matrix ops + lattice ZK \\
        Communication & Small integers + Merkle path & Large keys/ciphertexts \\
        Implementation risk & Low (standard libraries) & High (side channels, complex ZK) \\
        \bottomrule
    \end{tabular}
\end{table*}

\begin{proposition}[No secret recovery]\label{prop:norecovery}
Even after $\tau_Q$, an adversary controlling $<t$ shares learns nothing about $s^{\star}$ from $\mathcal{C}$ and the anchor beyond negligible advantage; in particular Shor does not recover $s^{\star}$.
\end{proposition}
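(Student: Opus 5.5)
The plan is a hybrid argument over the adversary's view
\[
V=\bigl(\mathcal{C},\ \text{anchor},\ \{(i,y_i,R_i,s_i,\pi_i)\}_{i\in T}\bigr),\qquad |T|\le t-1 .
\]
Unconditional (even unbounded) DLP power is granted, and only the hash is treated as a random oracle. The aim is to show that the distribution of $V$ is statistically close for any two candidate secrets $s^{\star}$ and $s'$. It then follows that knowing $x=\log_g h$ (Shor) adds nothing, because a computationally unbounded adversary already has $x$.

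\textbf{Step 1: the Pedersen part, with $\mathcal{C}$ and $\{(y_i,R_i)\}_{i\in T}$ only.} The coefficients $(a_1,\dots,a_{t-1})$ and randomizers $(r_0,\dots,r_{t-1})$ are uniform. Fix the secret $s$. By Lagrange interpolation, the $\le t-1$ values $y_i=f(i)$ for $i\in T$ are uniform and independent of $s$, since they are $t-1$ evaluations of a degree-$(t{-}1)$ polynomial whose free coefficients are uniform. Condition on them; the polynomial is then determined by $s$.

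Each $C_j=g^{a_j}h^{r_j}$ is uniform in the group because $r_j$ is uniform. What remains is the joint distribution of $(C_0,\dots,C_{t-1})$ and $R_i=\sum_j r_j i^j$ for $i\in T$. The coalition's consistency equations pin $|T|$ linear combinations of the $r_j$ in the exponent, which leaves at least one free degree of freedom.

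I would show this directly. Write $\log_g C_j=a_j+x r_j$. For any $s'$, the map $a'_j=a_j+\delta_j$, $r'_j=r_j-\delta_j/x$ preserves every $C_j$. Here $\delta(X)=\sum_j \delta_j X^j$ is the unique degree-$(t-1)$ polynomial with $\delta(0)=s'-s^{\star}$ and $\delta(i)=0$ for $i\in T$. Because $\delta(i)=0$ for $i\in T$, it also preserves every $y_i$ and every $R_i$ with $i\in T$. The map is a bijection on the randomness, so the view restricted to Pedersen data is identically distributed for $s^{\star}$ and $s'$. This is perfect hiding extended to $t-1$ openings.

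\textbf{Step 2: the anchor.} The leaves $\ell_i$ for $i\in T$ are functions of the coalition's data, so they are already covered. For $i\notin T$, $\ell_i=H(i\|y_i\|s_i)$ with a uniform secret salt $s_i\in\{0,1\}^\kappa$. In the random oracle model each such leaf is a uniform string, independent of everything else, unless the adversary queries $H$ on the exact input $(i\|y_i\|s_i)$. Each query hits a given salt with probability $2^{-\kappa}$.

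Taking a union bound over $q$ queries and $n$ leaves, the leaves for $i\notin T$ can be replaced by independent uniform strings at a cost of at most $qn2^{-\kappa}$. With Grover the adversary can do better, roughly a square-root speedup, giving about $qn2^{-\kappa/2}$ for $q\approx 2^{N/2}$. This is negligible once $\kappa$ is at least about $2N$ or so. In the Merkle case, the internal nodes and the authentication paths held by the coalition are deterministic functions of the leaves, so they inherit the same indistinguishability.

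\textbf{Step 3: combine.} Combine by the triangle inequality. The advantage is at most $0$ from Step 1 plus the negligible term from Step 2.

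The main obstacle is Step 2 against a quantum adversary. The random-oracle replacement argument needs a quantum-query version, for example a one-way-to-hiding style bound, rather than the classical union bound. The salt length $\kappa$ must be chosen relative to the $2^{N/2}$ query budget. I would state the bound as $O(q\sqrt{n2^{-\kappa}})$ and require $\kappa\ge 2N$ so that it is negligible in $N$.
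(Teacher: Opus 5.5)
Your proposal is correct. It follows the same two-part decomposition as the paper's sketch: Pedersen hiding for $\mathcal{C}$, and salted leaves in the random-oracle model for the anchor, which the paper delegates to Proposition~\ref{prop:hiding}. However, your Pedersen step is a strictly stronger argument.

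\textbf{Pedersen step.} The paper only observes that $\log_g C_0 = s + x r_0$ is one equation in two unknowns. That count looks at $C_0$ in isolation and ignores the other $C_j$ and the coalition's openings $(y_i,R_i)_{i\in T}$. It also never uses $|T|<t$: it would apply verbatim to a coalition holding $t$ openings, which can simply interpolate $s$. Your translation $a_j\mapsto a_j+\delta_j$, $r_j\mapsto r_j-\delta_j/x$ is a bijection between the uniform randomness spaces for $s^{\star}$ and $s'$ that fixes the entire joint Pedersen view. It is also exactly where the threshold enters, since a polynomial $\delta$ vanishing on $T$ with $\delta(0)=s'-s^{\star}$ exists because $|T|\le t-1$.

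\textbf{Anchor step.} The paper's $O(q\,2^{-\kappa})$ bound is the classical one. Your point that superposition queries call for a one-way-to-hiding bound of the form $O(q\sqrt{n\,2^{-\kappa}})$ is the right refinement. With $q\approx 2^{N/2}$ and $\kappa\ge 2N$ it yields $O(\sqrt{n}\,2^{-N/2})$, consistent with the paper's parameter choice.

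\textbf{Minor points.} First, $\delta$ is unique only when $|T|=t-1$; otherwise pick any one. Second, the translation tacitly requires the shares and the exponents to live in one prime field of order $\mathrm{ord}(g)$, with $x$ invertible there. The paper's setup leaves this implicit: it takes the polynomial over $\mathbb{F}_p$ while $g,h\in\mathbb{Z}_p^{*}$.
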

\begin{proof}[Proof sketch]
From $C_0=g^{s}h^{r_0}$ and $x=\log_g h$, the adversary obtains only $\log_g C_0 = s + x r_0$: one linear equation in the two unknowns $s,r_0$, satisfied by every candidate secret. This is Pedersen's perfect hiding, which Shor does not affect. The anchor is handled in Proposition~\ref{prop:hiding}.
\end{proof}

\subsection{Retained secrecy despite public hashing}
The naive scheme (hashing raw shares) breaks threshold secrecy: any party can confirm a guessed share by a single hash evaluation, collapsing Shamir's perfect secrecy at the threshold boundary to a checkable computation. Salting repairs this.

\begin{proposition}[Threshold secrecy]\label{prop:hiding}
Model $H$ as a random oracle. If each salt $s_i$ is uniform on $\{0,1\}^{\kappa}$ and known only to the dealer and party $i$, any coalition $\mathcal{T}$ with $|\mathcal{T}|<t$ (given $\mathcal{C}$, the anchor, all authentication paths it legitimately holds, and its own shares, randomizers, and salts) has advantage at most $O(q\,2^{-\kappa})$ in determining $s^{\star}$, where $q$ is its number of oracle queries. This holds against a post-$\tau_Q$ adversary.
\end{proposition}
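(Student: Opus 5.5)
The proof goes by a hybrid argument in the random-oracle model. The coalition's view splits into three parts: (i) its own data $\{(i,y_i,R_i,s_i,\pi_i)\}_{i\in\mathcal{T}}$; (ii) the Pedersen commitments $\mathcal{C}$; (iii) the anchor, namely the leaves $\ell_j=H(j\|y_j\|s_j)$ for $j\notin\mathcal{T}$, or a Merkle root together with the sibling hashes on the coalition's authentication paths. The plan is to show that part (iii) can be replaced by independent uniform strings, except with probability $O(q\,2^{-\kappa})$. After that replacement, the remaining view is exactly the standard Pedersen-VSS view of $<t$ parties, and that view is perfectly independent of $s^{\star}$, even to an unbounded (post-$\tau_Q$) adversary.

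\textbf{Step 1 (Pedersen part, perfect).} Fix the coalition's shares, which are $<t$ evaluations of $f$. By Shamir secrecy, every candidate $s$ is consistent with them through exactly one completion of $f$. Given the coalition's randomizers $R_i$, the commitment vector $\mathcal{C}$ reveals, even with $x=\log_g h$ known, only the values $a_j+xr_j$ subject to the published linear constraints. For each candidate $s$ and each consistent $f$, the number of randomizer vectors $(r_j)$ matching both $\mathcal{C}$ and the coalition's $R_i$ is the same. So the joint distribution of (i) and (ii) is independent of $s$. This is the argument of Proposition~\ref{prop:norecovery}, extended to include the coalition's shares; the unbounded post-$\tau_Q$ adversary gains nothing here because hiding is information-theoretic.

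\textbf{Step 2 (anchor part, the main obstacle).} Let $\mathsf{Bad}$ be the event that the adversary queries $H$ on some string $j\|y\|s_j$ with $j\notin\mathcal{T}$ and $s_j$ the true salt. Conditioned on $\neg\mathsf{Bad}$, each leaf $\ell_j$ with $j\notin\mathcal{T}$ is a random-oracle output on an unqueried point, so it is uniform and independent of everything else. In the Merkle case, the internal nodes on the coalition's paths are hashes of such uniform values (or of coalition leaves), so they are simulatable without knowing $y_j$. The delicate point is to handle Merkle internal nodes while ensuring the adversary's queries on them cannot leak information; treating internal hashes as functions of already-uniform leaves makes this straightforward. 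Until $\mathsf{Bad}$ occurs, the adversary's view is independent of the salts $s_j$, since only $\ell_j$ depends on them and $\ell_j$ is uniform. Hence each query hits a given salt with probability at most $2^{-\kappa}$. A union bound over the $q$ queries and at most $n$ indices gives $\Pr[\mathsf{Bad}]\le qn\,2^{-\kappa}$, i.e.\ $O(q\,2^{-\kappa})$ for fixed $n$. The bound counts oracle queries and does not depend on the adversary's computational power. For a quantum adversary with superposition access, I would cite the standard one-way-to-hiding or Grover-type bound, which gives $O(q^2 2^{-\kappa})$. I would remark on this gap rather than resolve it; the proposition's stated $q$ should be read as classical queries or with $\kappa$ doubled.

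\textbf{Step 3 (conclude).} Combine the two steps by a hybrid argument. In the ideal hybrid, the anchor values are uniform and independent, and by Step 1 the coalition's advantage is exactly zero. The real and ideal hybrids differ only on $\mathsf{Bad}$. Therefore the advantage is at most $\Pr[\mathsf{Bad}]=O(q\,2^{-\kappa})$. This holds after $\tau_Q$, since neither step used DLP hardness.
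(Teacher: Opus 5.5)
Your proposal is correct and takes the same route as the paper. The leaves of non-coalition parties are independent of their shares unless the exact salted point $j\|y_j\|s_j$ is queried, which costs $O(q\,2^{-\kappa})$, and the Pedersen layer stays perfectly hiding after Shor. Your identical-until-bad hybrid, Merkle-node simulation and randomizer-counting argument simply make rigorous what the paper sketches. Your superposition-query caveat corresponds to the paper's closing remark that $\kappa\ge 2N$ absorbs Grover.

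Two slips do not affect the argument:
\begin{itemize}
\item For $|\mathcal{T}|<t-1$, each candidate secret admits $p^{\,t-1-|\mathcal{T}|}$ completions of $f$, not exactly one. The count is still the same for every $s$, so Step 1 goes through.
\item The generic one-way-to-hiding lemma bounds the advantage by roughly $q\,2^{-\kappa/2}$. The quantity $q^2 2^{-\kappa}$ is instead Grover's success probability for finding the salt. Both are negligible for $q\le 2^{N/2}$ once $\kappa\ge 2N$.
\end{itemize}
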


\begin{proof}[Proof sketch]
Consider any $i\notin\mathcal{T}$. The coalition's view of $\ell_i=H(i\|y_i\|s_i)$ is, in the random-oracle model, independent of $y_i$ unless the coalition queries the oracle at the exact point $i\|y_i\|s_i$; lacking $s_i$ (uniform, secret), each guess of $y_i$ succeeds with probability $2^{-\kappa}$ per query, giving the stated $O(q\,2^{-\kappa})$ bound. Shor does not help: post-$\tau_Q$ the adversary learns $\log_g P_i = y_i + x R_i$, one equation in two unknowns per missing share, so the $\{y_i\}_{i\notin\mathcal{T}}$ retain a free coordinate exactly as in plain Shamir. The Pedersen commitments contribute nothing further, being perfectly hiding. Hence the missing shares are statistically undetermined and $s^{\star}$ is hidden up to the salt-guessing term. Choosing $\kappa\ge 2N$ keeps this negligible even against Grover.\footnote{Secrecy is thus \emph{statistical}, not perfect: the published leaf hashes admit a vanishing salt-guessing advantage. This is the price of a public, non-interactive anchor and is made negligible by the salt length.}
\end{proof}

\subsection{Ledger immutability and the transition assumption}
Assumption~\ref{ass:ledger} is where PiB's binding ultimately rests, and it is a systems assumption, not an algebraic one. A ledger authenticated only by classical signatures is itself Shor-breakable; PiB therefore requires that the chosen ledger migrate to post-quantum consensus and signatures, and that any relied-upon anchor be buried under post-quantum-finalized depth, \emph{before} $\tau_Q$. Grover's quadratic speedup on Proof-of-Work for chain re-organization is a constant-factor security erosion that must be absorbed by finalization depth or by post-quantum finality gadgets. We regard the timely execution of this migration (already underway in standardization~\cite{nist2024}) as the principal external dependency of the scheme, and note that a dealing loses no \emph{secrecy} if the assumption fails, only the binding.

\begin{proposition}[Scope of the guarantee]\label{rem:limit}
PiB does \emph{not} bind a dealer who is already quantum-capable at commit time. Such a dealer can, at $\tau_0\ge\tau_Q$, anchor a set of mutually inconsistent shares that nonetheless pass every individual Pedersen check (choosing each $R_i$ to fit), baking split-brain into the frozen set. Assumption~\ref{ass:window} excludes this case by fiat, and it is the realistic case: the concern motivating post-quantum migration is that \emph{present} honest dealings will face a \emph{future} CRQC.
\end{proposition}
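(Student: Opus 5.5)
The statement is a negative claim, so I will prove it by an explicit attack. I will exhibit a quantum-capable dealer who, at some $\tau_0\ge\tau_Q$, produces a PiB dealing in which every share passes both the inclusion check and the consistency check, yet two qualified subsets reconstruct different values. The construction reuses the equivocation computation from the introduction and from the proof of Proposition~\ref{prop:binding}. The difference is that the adversary now performs it \emph{before} the leaves are hashed and anchored, not after.

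\textbf{The attack.} The dealer first picks arbitrary values $a_0,\dots,a_{t-1}$ and $r_0,\dots,r_{t-1}$ and publishes $C_j=g^{a_j}h^{r_j}$. Using Shor, it computes $x=\log_g h$ and, for each $i$, the value $\log_g P_i=\sum_j (a_j+x r_j)i^j$ where $P_i=\prod_j C_j^{i^j}$. It then chooses share values $y_1',\dots,y_n'$ that do \emph{not} lie on any single polynomial of degree at most $t-1$. For example, it takes $y_i'=f(i)$ for $i\neq n$ and $y_n'\neq f(n)$; with $n>t$ this set is not interpolable by one such polynomial. For each $i$ it sets $R_i'=(\log_g P_i-y_i')/x$. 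Then $g^{y_i'}h^{R_i'}=g^{\log_g P_i}=P_i$, so the consistency check passes for every $i$.

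Next, the dealer draws salts $s_i$ honestly, forms $\ell_i=H(i\|y_i'\|s_i)$, builds the leaf set or Merkle root, and anchors it on the ledger at $\tau_0$. The inclusion check then passes by construction, with no hash attack needed. Ledger immutability (Assumption~\ref{ass:ledger}) now works against the honest parties: it freezes the inconsistent set.

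\textbf{Split-brain.} I will pick two $t$-subsets. Let $\mathcal{S}_1\subseteq\{1,\dots,n-1\}$, which reconstructs $f(0)$. Let $\mathcal{S}_2$ contain $n$ together with $t-1$ indices from $\mathcal{S}_1$. The Lagrange interpolant through $\mathcal{S}_2$ agrees with $f$ at $t-1$ points and differs at $n$. Its difference from $f$ is therefore a nonzero multiple of $\prod_{j\in\mathcal{S}_2\setminus\{n\}}(X-j)$. Evaluated at $0$, this product is nonzero because all indices are nonzero. So the two subsets reconstruct different secrets, which violates the Binding clause of Definition~\ref{def:vss}.

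\textbf{Where the proof of Proposition~\ref{prop:binding} fails.} The only step there that used $\tau_0<\tau_Q$ was the appeal to Pedersen binding at commit time. That step is what pins the anchored leaves to a single polynomial, and here it is unavailable. Assumption~\ref{ass:window} is precisely the hypothesis that excludes this case.

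\textbf{Main obstacle.} The construction is routine. The real care is in making the nonzero-difference argument at $0$ precise, which requires $n>t$ and distinct nonzero evaluation points. The case $n=t$ is degenerate, since there is only one qualified subset. It must be set aside explicitly, or the claim restated as the dealer's freedom to fix the reconstructed value after seeing $\mathcal{C}$.
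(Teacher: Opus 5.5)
Your proposal is correct and takes the same route as the paper, whose only justification is the one-line construction in the statement itself: a dealer who already knows $x=\log_g h$ picks mutually inconsistent $y_i'$, fits each $R_i'=(\log_g P_i-y_i')/x$ so every Pedersen check passes, and anchors honestly salted leaves so inclusion passes by construction. Your explicit Lagrange argument that two $t$-subsets reconstruct different values makes precise what the paper leaves implicit, and so does your caveat that this requires $n>t$, since with $n=t$ there is only one qualified subset.
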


\subsection{Comparison with lattice-based VSS}
Table~\ref{tab:comparison} contrasts PiB with a post-quantum cryptographic lattice-based approach. Neither is unconditional: lattice VSS rests on SIS/LWE (believed quantum-hard), PiB on second-preimage resistance plus a timing-and-immutability deployment assumption.

\section{Implementation Considerations}
\begin{itemize}
    \item \textbf{Hash size.} Binding rests on second-preimage resistance; against Grover this is $\approx 2^{N/2}$, so $N\ge 256$ (e.g.\ SHA3-256) gives $128$-bit security, and SHA3-512 a margin. We do \emph{not} rely on collision resistance, whose quantum bound (BHT, $\approx 2^{N/3}$) would be weaker.
    \item \textbf{Salt length.} Per Proposition~\ref{prop:hiding}, choose $\kappa\ge 2N$ so the salt-guessing advantage is negligible even under Grover; salts are distributed privately with shares and never published.
    \item \textbf{Ledger selection and timing.} Select a chain with a credible post-quantum consensus roadmap and anchor batches early, leaving ample finalized depth before $\tau_Q$ (Assumption~\ref{ass:ledger}).
    \item \textbf{Batch scale.} For a reserve of $M$ sharings over $n$ parties, a single Merkle root per sharing (or one aggregate root over all batches) keeps on-ledger footprint small; inclusion proofs are $O(\log n)$.
\end{itemize}

\section{Limitations}
The scheme's guarantees are conditional and we restate the conditions plainly. (i) Preserved binding requires the commitment to predate $\tau_Q$ (Assumption~\ref{ass:window}); it says nothing about dealings first anchored in the quantum era. (ii) Binding rests on the ledger genuinely finalizing history under post-quantum consensus before $\tau_Q$ (Assumption~\ref{ass:ledger}); this is a deployment and governance dependency outside the cryptography. (iii) Anchor-induced secrecy is statistical, not perfect, made negligible by salt length. (iv) PiB provides no defense against a dealer who is quantum-capable at commit time (Proposition~\ref{rem:limit}); protecting against such a dealer requires a commitment whose \emph{binding} is itself post-quantum, at which point the Pedersen layer becomes redundant. Within these bounds, PiB is best understood as long-term, publicly verifiable notarization that carries a classically established VSS binding safely across the quantum horizon.

\section{Conclusion}
Post-quantum failure of Pedersen VSS affects \emph{binding}, not \emph{hiding}: Shor enables equivocation but does not reveal the secret. PiB secures binding at the time of dealing and preserves it by anchoring a salted, index-bound hash to an immutable, transitioning ledger, while retaining Pedersen's pre-quantum zero-knowledge verification. The result is a \emph{commit-now, reveal-later} approach that protects today's honest dealings against future quantum adversaries using only classical primitives. Rather than replacing lattice-based cryptography, PiB provides a practical and auditable option for long-term verifiable secret sharing.

For any questions about licensing of this technology \cite{pib-patent}, please contact the organization of the authors.  

\bibliographystyle{IEEEtran} % Uses the standard IEEE numbering style
\bibliography{references}    % Links to references.bib (omit the .bib extension)

\end{document}